\pdfoutput=1

\documentclass[prl,aps,showpacs,twocolumn,twoside,superscriptaddress,floatfix]{revtex4-1}
\usepackage{amsmath,amssymb,latexsym,revsymb,theorem,verbatim,color}

\newtheorem{definition}{Definition}

\newtheorem{lemma}[definition]{Lemma}

\newtheorem{theorem}[definition]{Theorem}


\def\squareforqed{\hbox{\rlap{$\sqcap$}$\sqcup$}}
\def\qed{\ifmmode\squareforqed\else{\unskip\nobreak\hfil
\penalty50\hskip1em\null\nobreak\hfil\squareforqed
\parfillskip=0pt\finalhyphendemerits=0\endgraf}\fi}
\def\endenv{\ifmmode\;\else{\unskip\nobreak\hfil
\penalty50\hskip1em\null\nobreak\hfil\;
\parfillskip=0pt\finalhyphendemerits=0\endgraf}\fi}
\newenvironment{proof}{\noindent \textbf{{Proof.~} }}{\qed}

\def\r{\rho}


\newcommand{\nc}{\newcommand}
\nc{\bra}[1]{\langle#1|} \nc{\ket}[1]{|#1\rangle} \nc{\proj}[1]{|
#1\rangle\!\langle #1 |} \nc{\ketbra}[2]{|#1\rangle\!\langle#2|}
\nc{\braket}[2]{\langle#1|#2\rangle} \nc{\dg}{\dagger}
\nc{\norm}[1]{\lVert#1\rVert} \nc{\abs}[1]{|#1|}
\nc{\lar}{\leftarrow} \nc{\rar}{\rightarrow} \nc{\ox}{\otimes}
\nc{\op}[2]{|#1\rangle\langle#2|} \nc{\ip}[2]{\langle#1|#2\rangle}

\nc{\cA}{{\cal A}} \nc{\cB}{{\cal B}} \nc{\cC}{{\cal C}}
\nc{\cD}{{\cal D}} \nc{\cE}{{\cal E}} \nc{\cF}{{\cal F}}
\nc{\cG}{{\cal G}} \nc{\cH}{{\cal H}} \nc{\cI}{{\cal I}}
\nc{\cJ}{{\cal J}} \nc{\cK}{{\cal K}} \nc{\cL}{{\cal L}}
\nc{\cM}{{\cal M}} \nc{\cN}{{\cal N}} \nc{\cO}{{\cal O}}
\nc{\cP}{{\cal P}} \nc{\cR}{{\cal R}} \nc{\cS}{{\cal S}}
\nc{\cT}{{\cal T}} \nc{\cX}{{\cal X}} \nc{\cZ}{{\cal Z}}

\begin{document}

\title{Detecting Multipartite Classical States and their Resemblances}

\author{Lin Chen}
\address{Centre for Quantum Technologies, National University of
Singapore, 3 Science Drive 2, 117542, Singapore}

\author{Eric Chitambar}
\affiliation{Department of Physics and Department of Electrical \& Computer Engineering,
University of Toronto, Toronto, Ontario, M5S 3G4, Canada}

\author{Kavan Modi}
\address{Centre for Quantum Technologies, National University of
Singapore, 3 Science Drive 2, 117542, Singapore}

\author{Giovanni Vacanti}
\address{Centre for Quantum Technologies, National University of
Singapore, 3 Science Drive 2, 117542, Singapore}

\begin{abstract}

We study various types of multipartite states lying near the quantum-classical boundary.  The class of so-called classical states are precisely those in which each party can perform a projective measurement to identify a locally held state without disturbing the global state, a task known as non-disruptive local state identification (NDLID).  We introduce a new class of states called generalized-classical states which allow for NDLID when the most general quantum measurements are permitted.  A simple analytic method as well as a physical criterion are presented for detecting whether a multipartite state is classical.  To decide whether a state is generalized-classical, we provide a semi-definite programming algorithm which can be adapted for use in other unrelated contexts such as signal processing.

\end{abstract}

\date{\today}

\pacs{03.67.-a, 03.65.Ud, 03.67.Mn}

\maketitle

\emph{Introduction.}---There are many ways in which composite
quantum systems can exhibit non-classical properties.  The
correlations between entangled states have generated some of the
most puzzling paradoxes in quantum theory;
however even unentangled, or \textit{separable} states, possess
correlations that \textit{cannot} be simulated by classical systems
and thus defy our intuition.  Recently, much interest has been raised concerning the properties of these non-classical correlations with applications 
to a variety of fields \cite{Oppenheim-2002a,Groisman-2005a,Datta-2008a,Lanyon-2008a,Piani-2008a,Datta-2009a,Brodutch-2010a,Soares-Pinto-2010a}.  Of particular note is the DQC1 quantum computation model which runs exponentially faster than its best-known classical counterpart by using a highly mixed state possessing quantum correlations but no entanglement \cite{Datta-2008a,Lanyon-2008a}.  This supports a hypothesis that non-classical correlations are a more fundamental resource than entanglement in quantum computing.

In light of this, several measures have been designed to isolate and quantify 
precisely the non-classical nature of a quantum state such as quantum 
discord \cite{Ollivier-2001a}, quantum deficit \cite{Oppenheim-2002a}, measurement induced 
disturbance \cite{Luo-2008a}, and similar quantities \cite{Groisman-2005a,Devi-2008a,Modi-2010a}.  One common feature of all these measures is 
that they vanish for \textit{fully classical} states, i.e. those in which the shared correlations among all the parties can be simulated on a classical system.  Thus, any such measure can be interpreted as quantifying how far away a given state is from the classical-quantum border, even within the class of separable states.

In this Letter, we take an alternative approach to the sharpening of the quantum-classical boundary region; instead of grouping states in this region according to some numerical distance away from the set of classical states, we identify a state as ``nearly'' classical if it possesses a well-defined trace of some purely classical property.  Specifically, we address the following two questions: (i) in what physical ways can general quantum states resemble classical states, and (ii) how can one detect whether a given state is classical or at least resemblant to one in the sense of question (i)?  One answer to the first question, which we investigate below, involves a state's ability to undergo \textit{non-disruptive local state identification} (NDLID).  In the remainder of this letter, we will first give a precise description of NDLID and characterize the states which exhibit this property.  NDLID capable states are found to occupy a measure zero volume of state space and belong to the class of so-called minimal length separable states.  After that, we will proceed to answer question (ii) by providing computational and experimental methods for deciding whether or not a given multipartite state is classical or even just similar to one in its ability for NDLID.  Our detection algorithm can be efficiently implemented which differs drastically from the best known methods of detecting separability.

\emph{NDLID and a Hierarchy of Separable States.}---As a motivating example, consider the fully classical state $\rho=\frac{1}{2}(\op{00}{00}+\op{11}{11})$.  Each party can perform a projective measurement in the computational basis and learn his/her local state to be either $\ket{0}$ or $\ket{1}$.  When these results are not recorded or kept secret, the post-measurement state is still $\rho$, and the parties have thus identified their state without perturbing the overall state.  The ability for each party to perform such an information-gathering process without failure is not particular to this example but, in fact, completely characterizes the set of fully classical states \cite{Ollivier-2001a}.  As a result, the possibility for a given state to undergo some sort of NDLID can be regarded as a signature of ``classicalness.''

In general, we will say a state $\rho$ allows for NDLID by party $k$ if there exists a decomposition $\rho=\sum_ip_i\rho^{(\overline{k})}_i\otimes\op{\phi_i^{(k)}}{\phi_i^{(k)}}$ and local measurement $\{M_i^{(k)}\}_{i=1...n}$ with $\sum^n_{i=1}M^{(k)\dagger}_iM^{(k)}_i\leq I^{(k)}$ such that 
\begin{equation}
\label{Eq:localdist} M^{(k)}_i\op{\phi_j^{(k)}}{\phi_j^{(k)}}M^{(k)\dagger}_i=\lambda\delta_{ij}\op{\phi^{(k)}_j}{\phi^{(k)}_j}
\end{equation} for some $0<\lambda\leq 1$.  Upon outcome $i$, party $k$ can then conclude that his/her system is in state $\ket{\phi_i^{(k)}}$ among the ensemble $\{\ket{\phi_j^{(k)}}\}$, while the rest of the system is in state $\rho^{(\overline{k})}_i$.  Furthermore, it can easily be seen that under the action of this measurement, the global state remains invariant: $\sum^n_{i=1}(I^{(\overline{k})}\otimes M^{(k)}_i)\rho(I^{(\overline{k})}\otimes M^{(k)\dagger}_i)=\lambda \rho$.  

From Eq. \eqref{Eq:localdist}, it immediately follows that the task of NDLID is equivalent to unambiguous state discrimination among the states $\ket{\phi_j^{(k)}}$ with a post-selection rate of $\lambda$.  A well-known necessary and sufficient condition for accomplishing this feat is that the $\ket{\phi_j^{(k)}}$ are linearly independent \cite{Chefles-1998b}.  In this case, the measurement operators take the form $M^{(k)}_i=\op{\phi^{(k)}_{i_k}}{\phi^{(k)\perp}_{i_k}}$
where $\ip{\phi^{(k)}_{j_k}} {\phi^{(k)\perp}_{i_k}}=\delta_{ij}\lambda$ for some
$0<\lambda\leq 1$.  Furthermore, we have $\lambda=1$ if and only if the $\ket{\phi_j^{(k)}}$ are orthogonal and the NDLID can be performed by a complete projective measurement.  These facts motivate the following classifications of multipartite separable states.  
\begin{definition}
\label{defn:main} Let
$\{\ket{\phi(\vec{i})}\}=\{\ket{\phi^{(1)}_{i_1}\phi^{(2)}_{i_2}\dots\phi^{(N)}_{i_N}}\}$
denote a product state basis. {\begin{itemize}
\item[a.] A multipartite state $\rho$ is called \sl{separable} if it is diagonal in some product state basis; i.e.
\[\rho = \sum_{\vec{i}} p_{\vec{i}} \proj{\phi(\vec{i})},\]
\item[b.] The state $\rho$ is called \sl{generalized-classical} for the $k^{th}$ party if it is diagonal in some product state basis in which the states $\{\ket{\phi^{(k)}_{i_k}}\}$ are linearly independent.
\item[c.] The state $\rho$ is called \sl{classical} for the $k^{th}$ party if it is diagonal in some product state basis in which the states $\{\ket{\phi^{(k)}_{i_k}}\}$ are orthogonal.
\item[d.] The state $\rho$ is called \sl{fully generalized-classical} or \sl{fully classical} if it is diagonal in some product state basis in which statements b or c are true respectively for all parties.
\end{itemize}}
\end{definition} 
\noindent From the discussion preceding Definition \ref{defn:main}, generalized-classical states are nearly classical in the following sense:
\begin{quotation}
\noindent \textit{A state is classical (resp. generalized-classical) with respect to party $k$ iff party $k$ can perform NDLID by a projective (resp. generalized) measurement.}
\end{quotation}

There exists an even broader class of separable states still
hovering close to the quantum-classical border.  An $N$-partite state
$\rho$ of rank $r$ will be called a \textit{minimal length separable
state} if it has a decomposition $\rho=\sum_{i=1}^r\lambda_i
\op{\phi^{(1)}_1\cdots \phi^{(N)}_i}{\phi^{(1)}_1\cdots \phi^{(N)}_i}$ \cite{DiVincenzo-2000a}. It is
quite easy to see from the following lemma that any fully
generalized-classical state is also a minimal length separable state.

\begin{lemma}
For some multi-index $(i_1,...,i_N)$, if up to repetition of states the
$\ket{\phi^{(j)}_{i_j}}$ are linearly independent for all parties $j$, then
the product states $\ket{\phi^{(1)}_{i_1}\cdots\phi^{(N)}_{i_N}}$ are also
linearly independent.
\end{lemma}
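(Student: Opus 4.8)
The plan is to strip away the phrase ``up to repetition,'' reduce to genuinely distinct local vectors, and then establish the standard fact that a product family assembled from linearly independent local families is itself linearly independent. I would prove that fact directly by producing dual functionals that isolate each coefficient, which handles all $N$ parties simultaneously and avoids any induction.

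First I would interpret the hypothesis. For each party $j$ it asserts only that the \emph{distinct} vectors occurring among the $\ket{\phi^{(j)}_{i_j}}$ are linearly independent, so I relabel these distinct vectors as $\ket{\psi^{(j)}_1},\dots,\ket{\psi^{(j)}_{d_j}}$. Each product state in the list then coincides with exactly one $\ket{\psi^{(1)}_{m_1}\cdots\psi^{(N)}_{m_N}}$ with $1\le m_j\le d_j$, and two distinct product states necessarily yield distinct tuples $\vec m$ (if the tuples agreed the vectors would be equal). Hence it suffices to prove that the full product family $\{\ket{\psi^{(1)}_{m_1}}\ox\cdots\ox\ket{\psi^{(N)}_{m_N}}\}$ is linearly independent, since the product states actually appearing form a subfamily of it, and any subfamily of a linearly independent family is again independent.

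The engine of the proof is biorthogonality. For each party $j$, linear independence of $\{\ket{\psi^{(j)}_m}\}_{m=1}^{d_j}$ is equivalent to invertibility of their Gram matrix, which lets me build bras $\bra{\chi^{(j)}_m}$ with $\braket{\chi^{(j)}_m}{\psi^{(j)}_n}=\d_{mn}$. Because inner products factorize across tensor factors, the product functional $\bra{\chi^{(1)}_{\ell_1}}\ox\cdots\ox\bra{\chi^{(N)}_{\ell_N}}$ evaluated on $\ket{\psi^{(1)}_{m_1}\cdots\psi^{(N)}_{m_N}}$ equals $\prod_{j=1}^N\d_{\ell_jm_j}$. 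Then, assuming a vanishing combination $\sum_{\vec m}c_{\vec m}\ket{\psi^{(1)}_{m_1}\cdots\psi^{(N)}_{m_N}}=0$ and applying the product functional indexed by an arbitrary tuple $\vec\ell$, every term but one is annihilated and I read off $c_{\vec\ell}=0$. As $\vec\ell$ ranges over all tuples, all coefficients vanish, which is the claimed independence.

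I expect no real obstacle in the algebra; the only delicate point is the bookkeeping introduced by ``repetition.'' One must check both that collapsing each party's list to its distinct vectors does not accidentally identify two distinct product states, and that the dual functionals can be constructed at all---the latter requiring precisely the genuine distinctness arranged in the reduction step, since biorthogonal bras cannot exist when a party's list repeats a vector. Once that reduction is in place the factorized biorthogonality does all the work.
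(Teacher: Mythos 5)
The paper never actually proves this lemma --- it is stated bare, described as ``quite easy to see,'' and immediately applied --- so there is no in-paper argument to compare yours against; judged on its own, your proof is correct and fills the gap. Your reduction is the right reading of ``up to repetition'': pass to the distinct local vectors $\ket{\psi^{(j)}_1},\dots,\ket{\psi^{(j)}_{d_j}}$ for each party $j$, note that distinct product states must correspond to distinct tuples $\vec m$, and realize the states in question as a subfamily of the full product family. The biorthogonality engine is sound: in finite dimension, linear independence makes the Gram matrix invertible, which yields bras $\bra{\chi^{(j)}_m}$ with $\braket{\chi^{(j)}_m}{\psi^{(j)}_n}=\delta_{mn}$; since inner products factorize over tensor factors, the functional $\bra{\chi^{(1)}_{\ell_1}}\otimes\cdots\otimes\bra{\chi^{(N)}_{\ell_N}}$ isolates the coefficient $c_{\vec\ell}$, and independence follows. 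Two remarks. First, an even shorter route to the same fact is to extend each independent local family to a basis of its factor space and observe that products of basis vectors form a basis of the tensor product, of which your family is a subfamily; your dual-functional construction is just the explicit, self-contained version of this, so the two are essentially equivalent. Second, your set-level reading of the conclusion is the right one and worth making explicit: taken literally, a list in which two different multi-indices name the very same product vector (which can happen when repetitions occur in all parties simultaneously) is trivially dependent, but in the paper's application --- the diagonal decomposition of a fully generalized-classical state, used to show such states have minimal length --- the product states appearing are distinct, which is precisely the situation your injectivity observation handles.
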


By this lemma and Definition \ref{defn:main}, if $\rho$ is
fully generalized-classical, it has a decomposition
$\rho=\sum_{i=1}^\delta\op{\phi^{(1)}_{i_1}\cdots\phi^{(N)}_{i_N}}{\phi^{(1)}_{i_1}\cdots\phi^{(N)}_{i_N}}$
with $\delta\geq r$ and each $\ket{\phi^{(1)}_{i_1}\cdots\phi^{(N)}_{i_N}}$
linearly independent.  This last property implies that $r=\delta$ and
so we see that each fully generalized-classical state is a minimal length
state.  Furthermore, in the bipartite case, if a state is generalized-classical with respect to just one of the parties, it will be of minimal length.  The following chain of inclusions summarizes the main
parsings described in this letter:
\begin{center}
separable $\supset$ minimal length $\supset$ fully generalized-classical
$\supset$ fully classical $\supset$ product.
\end{center}
Here, product states refer to states of the form
$\rho=\rho_1\otimes\cdots\otimes \rho_N$. 

There are two reasons to consider minimal length states as also
lying near the quantum-classical border.  First, it is known that
only \textit{non}-minimal length states constitute the opposite end
of the spectrum at the separable/non-separable boundary
\cite{DiVincenzo-2000a}.  While this alone does not imply a
closeness between minimal length and classical states, such an
interpretation becomes further justified when considering the volumes of
each set in state space.  Separable states possess a nonzero volume
\cite{Zyczkowski-1998a} while minimal length states are of measure
zero \cite{Lockhart-2000a}.  This final point has an even greater
relevance to our discussion since it implies that fully
generalized-classical states are also of measure zero.  In other words,
nearly all multipartite quantum states lack the property of
non-disruptive local state identification.  Also note that this
provides an alternative proof for the result in Ref. \cite{Ferraro-2010a}
which shows a generic state to have a nonzero discord (i.e. is
non-classical).

\emph{Decision Algorithms for Classical and Generalized-Classical States.}--- In the last portion of this letter we address the question of deciding whether a given multipartite state is classical or generalized-classical.  Our results, discovered independently, generalize the recent works on this topic \cite{Datta-2010a, Brodutch-2010a, Bylicka-2010a, Dakic-2010a, Rahimi-2010a} in which necessary and sufficient conditions have been provided for deciding the non-classical
bipartite states.  The techniques we use are similar to those in Ref. 
\cite{Dakic-2010a} in that both our algorithms involve checking commutation relations.  Interestingly, we find that deciding whether a state is generalized-classical reduces to a problem similar in nature to those well-studied in the field of signal processing \cite{Yeredor-2002a,Lathauwer-2008a}.  Hence, our use of semi-definite programming (SDP) in detecting generalized-classical states may be of interest to researchers in that subject, as well as the linear algebra community at large.  From a computational complexity perspective, our results expose the complexity contrast between deciding whether a state possesses entanglement,
which is NP-Hard \cite{Gurvits-2003a}, and deciding whether a state
possess non-classical correlations, which can be done in polynomial
time.

We first make the easy but important observation that it is no more
difficult to decide whether a state is fully generalized-classical (resp. fully
classical) than it is to decide if the state is generalized-classical (resp. classical) for just a single party.
\begin{lemma}
\label{lem:multipart} The state $\rho$ is fully generalized-classical
(resp. classical) if it is generalized-classical (resp. classical) for
all parties.
\end{lemma}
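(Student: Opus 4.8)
The plan is to exploit the operational characterization just established: $\rho$ is generalized-classical (resp. classical) for party $k$ precisely when party $k$ can carry out NDLID, so there is a linearly independent (resp. orthogonal) local set $\{\ket{\phi^{(k)}_{i_k}}\}$ and filters $M^{(k)}_{i_k}=\op{\phi^{(k)}_{i_k}}{\phi^{(k)\perp}_{i_k}}$ with $\ip{\phi^{(k)}_{j_k}}{\phi^{(k)\perp}_{i_k}}=\lambda_k\delta_{ij}$ whose action leaves $\rho$ invariant up to a positive scale, $\cE_k(\rho)=\lambda_k\rho$, where $\cE_k(\cdot)=\sum_{i_k}(I^{(\bar k)}\ox M^{(k)}_{i_k})(\cdot)(I^{(\bar k)}\ox M^{(k)\dg}_{i_k})$. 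The naive difficulty is that the product basis witnessing generalized-classicality for party $1$ need not coincide with the one witnessing it for party $2$, so the individual decompositions cannot simply be intersected. The observation that removes this obstacle is that the maps $\cE_1,\dots,\cE_N$ act on pairwise distinct tensor factors.

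First, for each $k$ I would invoke the hypothesis to produce such a map with $\cE_k(\rho)=\lambda_k\rho$ and $0<\lambda_k\leq 1$. Next I would form the composition $\cE=\cE_1\circ\cdots\circ\cE_N$ and evaluate it on $\rho$: peeling off one map at a time and using linearity together with $\cE_k(\rho)=\lambda_k\rho$ at each stage gives $\cE(\rho)=\Lambda\rho$, where $\Lambda:=\prod_k\lambda_k>0$. Note that no commutation relation beyond the disjointness of the factors is required, since every map is only ever applied to a scalar multiple of $\rho$. Because the Kraus operators of the $\cE_k$ live on different subsystems, the Kraus operators of $\cE$ are the products $M_{\vec i}=\bigotimes_k M^{(k)}_{i_k}=\op{\phi^{(1)}_{i_1}\cdots\phi^{(N)}_{i_N}}{\phi^{(1)\perp}_{i_1}\cdots\phi^{(N)\perp}_{i_N}}$.

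The final step is to read off the decomposition. Each term contracts to a rank-one product projector, $M_{\vec i}\rho M_{\vec i}^{\dg}=c_{\vec i}\proj{\phi^{(1)}_{i_1}\cdots\phi^{(N)}_{i_N}}$ with $c_{\vec i}=\langle\phi^{(1)\perp}_{i_1}\cdots\phi^{(N)\perp}_{i_N}|\rho|\phi^{(1)\perp}_{i_1}\cdots\phi^{(N)\perp}_{i_N}\rangle\geq 0$, so that $\rho=\Lambda^{-1}\sum_{\vec i}c_{\vec i}\proj{\phi^{(1)}_{i_1}\cdots\phi^{(N)}_{i_N}}$. This is a single product-basis decomposition in which, for every party $k$, the local set $\{\ket{\phi^{(k)}_{i_k}}\}$ is the linearly independent (resp. orthogonal) set supplied by $\cE_k$; hence $\rho$ is fully generalized-classical (resp. fully classical) by Definition~\ref{defn:main}. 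The classical case is the specialization in which each $M^{(k)}_{i_k}$ is a projector and $\lambda_k=1$.

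I expect the conceptual crux to be the observation in the first step: distinct parties' possibly incompatible witnessing bases can nonetheless be reconciled because their NDLID filters act on disjoint subsystems and each preserves $\rho$ up to a positive factor. Granting this, the composition in the second step and the rank-one contraction in the third are routine, and the combined filter automatically manufactures the simultaneous product basis rather than requiring us to match the given ones by hand.
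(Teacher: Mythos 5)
Your proof is correct, but it follows a genuinely different route from the paper's. The paper works directly at the level of decompositions: writing $\rho=\sum_i\rho_i\ox\ketbra{b_i}{b_i}=\sum_i\ketbra{a_i}{a_i}\ox\sigma_i$ with the $\ket{a_i}$ and $\ket{b_i}$ linearly independent (resp.\ orthonormal), it observes (in effect by contracting against the duals of the $\ket{b_i}$) that each $\rho_i$ is a combination of the $\ketbra{a_j}{a_j}$, so the single cross basis $\ket{a_j}\ox\ket{b_i}$ diagonalizes $\rho$; the multipartite case is declared an immediate generalization. You instead argue operationally: you build the NDLID filter channels $\cE_k$ with $\cE_k(\rho)=\lambda_k\rho$, compose them, and read the simultaneous decomposition off the Kraus operators $\bigotimes_k\op{\phi^{(k)}_{i_k}}{\phi^{(k)\perp}_{i_k}}$ of the composite, each of which sends $\rho$ to a nonnegative multiple of a product projector. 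What each buys: the paper's argument is more elementary and self-contained (pure linear algebra, no appeal to the unambiguous-discrimination machinery), while yours treats arbitrary $N$ uniformly with no hand-waving and makes the physical mechanism transparent --- non-disruptive filters acting on disjoint factors compose into a global filter that manufactures the joint product basis. Two small cautions. First, although you invoke the boxed characterization as a ``precisely when,'' your argument needs (and correctly uses) only the forward direction: Definition~\ref{defn:main}b for party $k$ yields the filters, by grouping the diagonal decomposition over $i_k$ and taking duals of the linearly independent local set. That direction is solid; the converse is delicate for $N\geq 3$ (NDLID capability by itself only gives diagonality across the $k$-versus-rest cut, not full product diagonality) and should not be leaned on. Second, as in the paper's own proof, if the local sets do not span the local spaces, the linearly independent product set you obtain must be completed to a product basis --- a harmless technicality shared by both arguments.
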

\begin{proof}
We will prove this for the bipartite case, but the idea immediately
generalizes to arbitrary number of parties.  Suppose $\rho=\sum_i
\rho_i \otimes \ketbra{b_i}{b_i} =
\sum_i\ketbra{a_i}{a_i}\otimes\sigma_i$ where the $\ket{b_i}$ and
$\ket{a_i}$ are linearly independent (resp. orthonormal).  Then we
see that each $\rho_i$ is a linear combination of the
$\ketbra{a_i}{a_i}$ so that $\ket{a_i}\otimes\ket{b_j}$ is a product
basis in which $\rho$ is diagonal.
\end{proof}

By Lemma \ref{lem:multipart}, it will be sufficient to only consider bipartite systems in the following discussion.  So introduce Alice and Bob and let $d_A$ and $d_B$ denote
the dimensions of their subsystems respectively.  Assume that some state $\rho$ is classical or generalized-classical with respect to Bob. By definition, there exists some basis
$\ket{b_i}$ such that
\begin{gather}
  \label{def:semiclassical}
  \rho= \sum_i p_i \r_i \ox \proj{b_i},
\end{gather}
while for classical states, the $\ket{b_i}$ are orthogonal.  Note
that in both cases, the contraction
$\langle\phi^{(A)}_1|\r|\phi^{(A)}_2\rangle$ will be diagonal in the basis
$\ket{b_i}$ for any two states
$\ket{\phi^{(A)}_1},\ket{\phi^{(A)}_2}\in\mathcal{H}_A$.  This fact leads to
the following theorem.
\begin{theorem}\label{thm:semiclassical}
  Let $\{\ket{\phi^{(A)}_i}\}$ be any orthonormal basis for $\mathcal{H}_A$.  Then $\rho$ is generalized-classical (resp. classical) if and only if
\begin{gather}\label{al:biclassical2}
  \rho_{ij}^{(B)}:=\langle\phi^{(A)}_i| \r |\phi^{(A)}_j\rangle
\end{gather}
is diagonal in the same (resp. orthonormal) basis $\{\ket{b_i}\}$
for all $i,j$.
\end{theorem}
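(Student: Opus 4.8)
The plan is to prove both implications, with the reverse (``if'') direction carrying essentially all of the content. Throughout I read ``$\rho_{ij}^{(B)}$ is diagonal in the basis $\{\ket{b_i}\}$'' as the statement that each block is expressible as a linear combination $\sum_k m_{ij}^{(k)}\proj{b_k}$ of the rank-one operators built from the (linearly independent, and in the classical case orthonormal) basis vectors; this is exactly the structure that the forward direction produces, so it fixes the intended meaning.

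For the forward direction, assume $\rho$ is generalized-classical (resp.\ classical) with respect to Bob, so by Eq.~\eqref{def:semiclassical} one has $\rho=\sum_k p_k\,\r_k\ox\proj{b_k}$ with the $\ket{b_k}$ linearly independent (resp.\ orthonormal). Contracting the $A$ factor gives
\[
\rho_{ij}^{(B)}=\bra{\phi^{(A)}_i}\r\ket{\phi^{(A)}_j}=\sum_k p_k\,\bra{\phi^{(A)}_i}\r_k\ket{\phi^{(A)}_j}\,\proj{b_k},
\]
so every block is a linear combination of the $\proj{b_k}$ and hence diagonal in $\{\ket{b_k}\}$. This is a one-line computation.

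The reverse direction is where the work lies, and I would split it into two moves. First, reconstruction: because $\{\ket{\phi^{(A)}_i}\}$ is an orthonormal basis of $\mathcal{H}_A$, resolving the identity on the $A$ factor gives $\rho=\sum_{ij}\op{\phi^{(A)}_i}{\phi^{(A)}_j}\ox\rho_{ij}^{(B)}$ (verified by sandwiching with $\bra{\phi^{(A)}_a}$ and $\ket{\phi^{(A)}_b}$ and using orthonormality). Substituting the hypothesis $\rho_{ij}^{(B)}=\sum_k m_{ij}^{(k)}\proj{b_k}$ and regrouping the sum yields
\[
\rho=\sum_k M^{(k)}\ox\proj{b_k},\qquad M^{(k)}:=\sum_{ij}m_{ij}^{(k)}\op{\phi^{(A)}_i}{\phi^{(A)}_j}.
\]
Second, and this is the crux, I must show each $M^{(k)}\geq 0$. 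Let $\{\ket{b^k}\}$ be the dual basis in the span of the $\ket{b_k}$, so $\ip{b^k}{b_l}=\delta_{kl}$; contracting $\rho$ on the $B$ factor with $\ket{b^k}$ isolates the block, $M^{(k)}=(I_A\ox\bra{b^k})\,\rho\,(I_A\ox\ket{b^k})$. Since $\rho\geq 0$, for any $\ket{\alpha}\in\mathcal{H}_A$ the vector $\ket{\psi}=\ket{\alpha}\ox\ket{b^k}$ gives $\bra{\alpha}M^{(k)}\ket{\alpha}=\bra{\psi}\rho\ket{\psi}\geq 0$, so $M^{(k)}\geq 0$. Each PSD $M^{(k)}$ is, up to normalization, an Alice density matrix, so $\rho=\sum_k M^{(k)}\ox\proj{b_k}$ is precisely the defining form~\eqref{def:semiclassical} with linearly independent $\ket{b_k}$ (orthonormal in the classical case, where $\ket{b^k}=\ket{b_k}$), proving $\rho$ generalized-classical (resp.\ classical).

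The main obstacle is this last positivity step. The reconstruction is pure bookkeeping, but because the $\ket{b_k}$ need not be orthonormal, positivity of a single reconstructed block cannot be read off blockwise; it genuinely requires invoking the dual basis and the global positivity of $\rho$, which is the one place the argument uses more than the linear algebra of the blocks $\rho_{ij}^{(B)}$.
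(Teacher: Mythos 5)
Your proof is correct and takes essentially the same route as the paper's: both directions match, with the sufficiency argument reconstructing $\rho=\sum_k M^{(k)}\otimes\proj{b_k}$ from the blocks and then obtaining positivity of each Alice-side operator by contracting $\rho$ with the dual-basis vectors (your $\ket{b^k}$ is exactly the paper's $\ket{b^\perp_m}$, and your $M^{(k)}$ is the paper's $\rho_m=\langle b^\perp_m|\rho|b^\perp_m\rangle$). The only difference is expository --- you spell out the positivity step that the paper asserts in a single line.
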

\begin{proof}
  Necessity follows from the above observation.  For sufficiency, suppose that $\rho_{ij}^{(B)}=\sum_m b_{ijm}\ketbra{b_m}{b_m}$ where $\{\ket{b_m}\}$ is any linearly independent (resp. orthonormal) set spanning $\mathcal{H}_B$. From the general expansion $\rho=\sum_{ijmn} c_{ijmn}\ketbra{\phi^{(A)}_i} {\phi^{(A)}_j}\otimes\ketbra{b_m}{b_n}$, we see that $c_{ijmn}=\delta_{mn}b_{ijm}$ and so
\begin{equation}
\rho=\sum_{ijm}b_{ijm}\ketbra{\phi^{(A)}_i}{\phi^{(A)}_j}\otimes\ketbra{b_m}{b_m}
=\sum_m\rho_m\otimes\ketbra{b_m}{b_m}
\end{equation}
where $\rho_m=\sum_{ij}b_{ijm}\ketbra{\phi^{(A)}_i}{\phi^{(A)}_j}=\langle
b^\perp_m|\r|b^\perp_m\rangle$ and $\ket{b^\perp_m}$ are vectors
such that $\braket{b_i}{b^\perp_j}=\delta_{ij}$.  The last equation
implies that $\rho_m$ is semidefinite positive. Hence the state
$\rho$ is generalized-classical (resp. classical) as defined in
Eq.~\ref{def:semiclassical}. \end{proof}

Theorem~\ref{thm:semiclassical} implies that to decide whether
$\rho$ is generalized-classical for Bob, we need to check whether the
$\frac{1}{2}d_A(d_A-1)$ matrices $\{\bra{\phi^{(A)}_i}\r\ket{\phi^{(A)}_j}
\}_{1\leq i\leq j\leq d_A}$ of size $d_B\times d_B$ are
simultaneously congruent to diagonal matrices.  In a more general
form, this problem asks for some set $\{A_i\}_{i=0...m}$ of $n\times
n$ matrices whether there exists an invertible matrix $P$ such that
$PA_iP^\dagger=\Lambda_i$ is diagonal for all $i$.  This is a natural question to ask in linear algebra studies and we have already alluded to practical situations in which it arises outside of quantum information.  We thank Yaoyun
Shi for his assistance with the following.  To our knowledge, SDP is a previously unrecognized approach to solving the described problem.
\begin{lemma}\label{lem:SDP} $[\text{Shi}]$ Deciding if nonsingular $P$ exists such that $PA_iP^\dagger=\Lambda_i$ can be achieved by a semi-definite program (SDP).
\end{lemma}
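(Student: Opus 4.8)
The plan is to convert the apparently non-convex search over invertible matrices $P$ into a convex feasibility problem over a single positive-definite matrix $X$, which is exactly the form an SDP solves.

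First I would reduce to the Hermitian case. Since $(PA_iP^\dg)^\dg=PA_i^\dg P^\dg$ and a matrix $M$ is diagonal iff both $M+M^\dg$ and $M-M^\dg$ are diagonal, the family $\{A_i\}$ is simultaneously diagonalizable by congruence iff the enlarged Hermitian family $\{(A_i+A_i^\dg)/2,\ (A_i-A_i^\dg)/(2\mathrm{i})\}$ is (scaling each member by a nonzero constant does not affect the property). Hence I may assume every $A_i=A_i^\dg$.

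Second, and this is the crux, I would establish the equivalence: Hermitian $\{A_i\}$ admit an invertible $P$ with every $PA_iP^\dg$ diagonal iff there exists $X\succ 0$ with $A_iXA_j=A_jXA_i$ for all $i,j$. The tool is the polar decomposition $P=UY$, with $U$ unitary and $Y=(P^\dg P)^{1/2}\succ 0$, together with the substitution $B_i:=YA_iY$ (again Hermitian). Then $PA_iP^\dg=UB_iU^\dg$, so all $PA_iP^\dg$ are diagonal iff the $B_i$ are simultaneously unitarily diagonalizable, which for Hermitian matrices happens iff they commute. Finally, conjugating by the invertible $Y$ shows $B_iB_j=B_jB_i$ is equivalent to $A_iXA_j=A_jXA_i$ with $X=Y^2$. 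This yields both directions at once: from a feasible $P$ take $X=P^\dg P$; from a feasible $X$ set $Y=X^{1/2}$, unitarily diagonalize the commuting Hermitian matrices $B_i=YA_iY$ by some $U$, and put $P=UY$.

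Third I would read off the program. The relations $A_iXA_j-A_jXA_i=0$ are linear in the entries of $X$ (each such expression is anti-Hermitian, hence equivalent to a finite set of real linear equations), while $X\succ 0$ is the positive-definiteness condition; together they cut out a spectrahedron. Because the equations are homogeneous, strict positivity is decided by the standard SDP that maximizes $\lambda_{\min}(X)$ subject to $\mathrm{Tr}\,X=1$, the linear equations, and $X\succeq 0$: the constraint set is compact so the optimum is attained, and the $A_i$ are simultaneously diagonalizable by congruence iff this optimum is strictly positive. The main obstacle is the second step, specifically the linearization. The condition that falls directly out of congruence is that the matrices $XA_i$ commute, which is quadratic in $X$; the key observation is that $[XA_i,XA_j]=0$ is equivalent, for invertible $X$, to the genuinely linear relation $A_iXA_j=A_jXA_i$, and that the square-root factorization $X=Y^2$ transports a congruence by $P$ to a commuting family of honestly Hermitian matrices $YA_iY$, so that no extra diagonalizability hypotheses are ever required.
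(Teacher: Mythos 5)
Your proposal is correct and follows essentially the same route as the paper: both reduce to Hermitian $A_i$, characterize simultaneous congruence-diagonalizability by commutation of the Hermitian matrices $PA_iP^\dagger$ (via simultaneous unitary diagonalization of commuting Hermitian matrices), and linearize this quadratic condition to $A_iXA_j=A_jXA_i$ with $X=P^\dagger P\succ 0$. The only cosmetic differences are your explicit use of the polar decomposition, where the paper absorbs the unitary implicitly, and your treatment of strict positivity by maximizing $\lambda_{\min}(X)$ at fixed trace, where the paper simply rescales the homogeneous constraints to impose $W-I\geq 0$.
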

To construct the algorithm, we first assume without loss of generality that the $A_i$ are hermitian.
For we can always write $A_i=A_i'+iA_i''$ where $A_i'$ and $A_i''$
are hermitian.  Then $PA_iP^\dagger$ is diagonal if and only if
$PA_i^\dagger P^\dagger$ is diagonal if and only if both
$PA_i'P^\dagger$ and $PA_i''P^\dagger$ are diagonal.  So with $A_i$
being hermitian, the $PA_iP^\dagger$ are hermitian and if
$PA_iP^\dagger=\Lambda_i$, the $PA_iP^\dagger$ are simultaneously
diagonalized and therefore $[PA_iP^\dagger,PA_jP^\dagger]=0$ for all
$i,j$.  Conversely, if this latter condition holds, then there
exists a unitary $U$ such that $UPA_iP^\dagger U^\dagger =\tilde{P}
A_i\tilde{P^\dagger} =\Lambda_i$ for all $i$.  So the question is
whether $PA_iP^\dagger PA_jP^\dagger=PA_jP^\dagger PA_iP^\dagger$
for all $i,j$.  Or in other words, $A_iWA_j=A_jWA_i$ where $W$ is a
positive-definite matrix.  Note that if $W$ is positive-definite,
then we can scale appropriately so that $W\geq I$.  Thus, we have
the SDP feasibility problem:
\begin{align}
&\text{Find} &&W\notag\\
&\text{subject to} &&A_iWA_j=A_jWA_i\;\text{for all}\;i,j&\notag\\
&&&W-I\geq 0.
\end{align}
Known algorithms based on the ellipsoid and interior-point methods
can efficiently solve this problem~\cite{Vandenberghe-1994a}.

To decide whether $\rho$ is classical for Bob, the situation is
easier.  We first begin by choosing any basis $\{\ket{\phi^{(A)}_i}\}$
for Alice and checking whether $\rho^{(B)}_{ij}$ is diagonalizable for
all $i,j$.  In total, there will be $\frac{d_A^2 - d_A}{2}$ matrices
to check.  If these are not diagonalizable, then by
Theorem~\ref{thm:semiclassical}, $\rho$ is not classical. If so,
$\rho$ is classical if and only if the commutation
$[\rho^{(B)}_{ij},\rho^{(B)}_{kl}]$ vanishes for all $i,j,k,l$, which
amounts to at most $\frac12 d_A^2(d_A^2-1)$ commutation relations to
check.  In the case that all operators commute, a common eigenbasis
$\{\ket{a_i}\}$ can be easily computed; the
sufficiency of Theorem~\ref{thm:semiclassical} proves $\rho$ to be
classical.

\emph{Physical detection of classical states}---Theorem
\ref{thm:semiclassical} can be experimentally implemented by a set
of projective operations and quantum state tomography.
A direct reconstruction of the elements in Eq. \ref{al:biclassical2} is not possible
since they are not Hermitian and therefore do not correspond to anything physical.  However, these terms
can be computed indirectly if Alice makes a set of
linearly independent projective operations (observables) that span
her Hilbert-Schmidt space:
$\mathcal{L}=\{\ket{\phi^{(A)}_i}\bra{\phi^{(A)}_i},
\ket{\psi^{(A)}_{ij}}\bra{\psi^{(A)}_{ij}},
\ket{\chi^{(A)}_{ij}}\bra{\chi^{(A)}_{ij}}\}$ where
$\ket{\chi^{(A)}_{ij}}=\frac{1}{\sqrt{2}}
\left(\ket{\phi^{(A)}_i}-i\ket{\phi^{(A)}_j}\right)$ and
$\ket{\psi^{(A)}_{ij}}=\frac{1}{\sqrt{2}}
\left(\ket{\phi^{(A)}_i}+\ket{\phi^{(A)}_j}\right)$ for $i>j$. With that we
have the elements of Eq. \ref{al:biclassical2}:
$\bra{\phi^{(A)}_i}\rho\ket{\phi^{(A)}_j} =\bra{\psi^{(A)}_{ij}}
\rho\ket{\psi^{(A)}_{ij}}
+i\bra{\chi^{(A)}_{ij}}\rho\ket{\chi^{(A)}_{ij}}-\frac{1+i}{2}
(\bra{\phi^{(A)}_i}\rho\ket{\phi^{(A)}_i}
+\bra{\phi^{(A)}_j}\rho\ket{\phi^{(A)}_j})$.

According to Theorem \ref{thm:semiclassical} a state $\rho$ is
classical if and only if it has the same orthonormal basis for
$\bra{\phi^{(A)}_i}\rho\ket{\phi^{(A)}_j}$ for all $i,j$. It is clear that
if $\mbox{Tr}_A[P\rho]$ is diagonal for all $P\in\mathcal{L}$ then
$\rho$ is classical.  Conversely, $\bra{\phi^{(A)}_i}\rho\ket{\phi^{(A)}_j}$
diagonal in some orthonormal basis for all $i,j$ implies that
$\mbox{Tr}_A[P\rho]$ is diagonal in same basis for all
$P\in\mathcal{L}$.  As the elements of $\mathcal{L}$ span Alice's
space, any POVM she can perform will have operator elements with
each being a linear combination of these projectors.  Furthermore, if we consider ``Alice's'' system as the joint system of $N-1$ parties, then any local POVM performed by the $N-1$ parties will have product operators $E_{\vec i}=\bigotimes_{j=1}^{N-1}E_{i_j}^{(j)}$ also being a linear combination of projectors from $\mathcal{L}$, and conversely any element of $\mathcal{L}$ can be expressed as a linear combination of product operators constituting complete local measurements on the $N-1$ subsystems.  Thus we obtain the following: 
\begin{theorem}
\label{lem:classical commutator}
An $N$-partite state $\rho$ is classical with respect to party $k$ if and only if
for any local POVM performed by the other parties,
\begin{gather}
[\rho_{\vec{i}},\rho_{\vec{i'}}]=0\;\;\mbox{for all}\;\;\vec{i},\vec{i'}
\end{gather}
where $\rho_{\vec{i}}=\mbox{Tr}_{\overline k}[E_{\vec{i}}\rho]$. 
\end{theorem}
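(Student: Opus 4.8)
\emph{Proof strategy.}---The plan is to reduce this multipartite statement to the bipartite characterization of Theorem~\ref{thm:semiclassical} by regarding the $N-1$ parties $\overline{k}$ as a single ``Alice'' system with Hilbert space $\cH_{\overline{k}}$ and party $k$ as ``Bob'' with space $\cH_k$. Fixing an orthonormal \emph{product} basis $\{\ket{\phi_{\vec i}}\}$ of $\cH_{\overline{k}}$, Theorem~\ref{thm:semiclassical} tells us that $\rho$ is classical with respect to $k$ if and only if every contraction $\rho_{\vec i\vec j}:=\bra{\phi_{\vec i}}\r\ket{\phi_{\vec j}}$ is diagonal in one common orthonormal basis $\{\ket{b_m}\}$ of $\cH_k$. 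A short index computation using the hermiticity of $\r$ and of each $E_{\vec i}$ shows that $\rho_{\vec i}=\mbox{Tr}_{\overline{k}}[E_{\vec i}\rho]$ is Hermitian whenever $E_{\vec i}$ is a positive product POVM element, so the family $\{\rho_{\vec i}\}$ consists of Hermitian operators on $\cH_k$; this is what makes the commutation condition meaningful.

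For the forward direction I would assume $\rho$ is classical with respect to $k$ and write $\rho=\sum_m p_m\,\sigma_m\ox\op{b_m}{b_m}$ with $\{\ket{b_m}\}$ orthonormal. Then for any product POVM element $E_{\vec i}$ one has $\rho_{\vec i}=\sum_m p_m\,\mbox{Tr}[E_{\vec i}\sigma_m]\,\op{b_m}{b_m}$, which is diagonal in the fixed basis $\{\ket{b_m}\}$. Since all the $\rho_{\vec i}$ are then simultaneously diagonal, they pairwise commute, giving $[\rho_{\vec i},\rho_{\vec i'}]=0$ for every choice of local POVMs.

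For the converse I would assume that the Hermitian operators $\rho_{\vec i}$, ranging over all product POVM elements from all local measurements, pairwise commute. Commuting Hermitian operators share a common orthonormal eigenbasis $\{\ket{b_m}\}$, so every $\rho_{\vec i}$, and hence the entire complex span $\cS:=\mathrm{span}_{\mathbb{C}}\{\rho_{\vec i}\}$, lies in the commutative algebra of operators diagonal in $\{\ket{b_m}\}$. It then remains to show that $\cS$ contains every contraction $\rho_{\vec i\vec j}$; once this is done, each $\rho_{\vec i\vec j}$ is diagonal in $\{\ket{b_m}\}$ and Theorem~\ref{thm:semiclassical} forces $\rho$ to be classical with respect to $k$. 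To establish the inclusion I would combine the set $\cL$ introduced above with the linearity of the map $E\mapsto\mbox{Tr}_{\overline{k}}[E\rho]$: the quoted polarization identity expresses each $\rho_{\vec i\vec j}$ as a linear combination of terms $\mbox{Tr}_{\overline{k}}[P\rho]$ with $P\in\cL$, and each such $P$ is itself a linear combination of product rank-one projectors arising as elements of complete local measurements, because on each single party the rank-one projectors already span $\cB(\cH_j)$ and tensor products of spanning sets span $\cB(\cH_{\overline{k}})$. Hence $\mbox{Tr}_{\overline{k}}[P\rho]\in\cS$ for all $P\in\cL$, and therefore $\rho_{\vec i\vec j}\in\cS$ as required.

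The main obstacle is precisely this converse bridge. The commutation hypothesis only constrains the genuine operators $\rho_{\vec i}$ coming from positive product POVM elements, whereas Theorem~\ref{thm:semiclassical} is phrased in terms of the non-positive, generally non-product contractions $\rho_{\vec i\vec j}$. Closing the gap requires the two-way spanning equivalence between product POVM elements and the mixed-party projectors in $\cL$ such as $\op{\psi_{ij}}{\psi_{ij}}$ and $\op{\chi_{ij}}{\chi_{ij}}$, which rests on the single-party spanning fact and its behaviour under tensor products; it also relies on reading the commutation condition globally across distinct local measurements, since simultaneous diagonalizability in a single basis $\{\ket{b_m}\}$ cannot be recovered from within-measurement commutation alone. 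Verifying the spanning claim for the mixed-party projectors is the one step that demands genuine care, while the remaining manipulations are routine linear algebra.
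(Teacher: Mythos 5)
Your proposal is correct and takes essentially the same approach as the paper: reduce to the bipartite split of party $k$ versus the rest, invoke Theorem~\ref{thm:semiclassical}, recover the contractions $\bra{\phi_{\vec i}}\rho\ket{\phi_{\vec j}}$ via the polarization identity through the projector set $\mathcal{L}$, and use the two-way spanning between elements of $\mathcal{L}$ and product elements of complete local measurements. The one obstacle you flag is not actually a gap: given any two local POVMs $\{E_{\vec i}\}$ and $\{F_{\vec j}\}$, each party can perform the merged POVM consisting of the halved elements of both of its measurements, so positive multiples of every $E_{\vec i}$ and every $F_{\vec j}$ appear within a single local POVM; hence within-measurement commutation for all local POVMs already implies the cross-measurement commutation (and thus the single common eigenbasis) that your converse requires, a point the paper itself also passes over silently.
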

Quantum mechanics and
commutators are intimately related since days of the theory's
foundation. Here, we see that the non-classical nature of a state can be detected precisely by the non-commutativity of reduced states after some local POVM is locally implemented on all but one of the subsystems.

\emph{Conclusion.}---We have introduced a class of states called generalized-classical which permit the purely classical task of non-disruptive local state identification when general quantum measurements are used.  In this sense, generalized-classical states can be said to hover near the quantum-classical boundary.  We have provided methods, both analytic and physical, which decide if a state is classical or generalized-classical.  For the latter, our algorithm amounts to a seemingly novel way for deciding whether a set of matrices can be simultaneously diagonalized by a general (non-necessarily orthogonal) congruence transformation.  Our results hold in the multipartite setting where states can be classical or generalized-classical with respect to one or many of the involved parties.  
We believe these results are helpful in better understanding the intersection between classical and quantum regimes.

{\bf Acknowledgment.} LC thank Dr. Ying Li and Prof. Wei Song for
helpful discussions. KM thanks B. Dakic, C. Rodriguez-Rosario, 
and V. Vedral for discussions. EC is partially supported by the 
U.S. NSF under Awards 0347078 and 0622033. The Center for Quantum 
Technologies is funded by the Singapore Ministry of Education and 
the National Research Foundation as part of the Research Centres 
of Excellence programme.

\bibliography{ClassicalStatesBib}

\end{document}